\newtheorem{corollary}{Corollary}
\newtheorem{lemma}{Lemma}
\theoremstyle{definition}
\newtheorem{definition}{Definition}
\theoremstyle{remark}
\renewcommand{\phi}{\varphi}
\newcommand{\agents}{\mathcal{A}}
\newcommand{\K}{\textbf{K}\xspace}
\renewcommand{\phi}{\varphi}
\newcommand{\ag}{\mathcal{A}}
\newcommand{\prop}{\mathcal{P}}
\renewcommand{\theta}{\vartheta}
\title{Simple Axioms for Local Properties}
\date{}
\author{Philippe Balbiani\institute{Institut de Recherche en Informatique de Toulouse}\email{philippe.balbiani@irit.fr} \and Wiebe van der Hoek\institute{University of Liverpool}\email{wiebe@liverpool.ac.uk} \and Louwe B. Kuijer\institute{University of Liverpool}\email{lbkuijer@liverpool.ac.uk}}
\begin{document}
\maketitle

\begin{abstract}
Correspondence theory allows us to create sound and complete axiomatizations for modal logic on frames with certain properties. For example, if we restrict ourselves to transitive frames we should add the axiom $\square \phi \rightarrow \square\square\phi$ which, among other things, can be interpreted as positive introspection. One limitation of this technique is that the frame property and the axiom are assumed to hold globally, i.e., the relation is transitive throughout the frame, and the agent's knowledge satisfies positive introspection in every world.

In a modal logic with local properties, we can reason about properties that are not global. So, for example, transitivity might hold only in certain parts of the model and, as a result, the agent's knowledge might satisfy positive introspection in some worlds but not in others. Van Ditmarsch et al. \cite{ditmarsch_2012} introduced sound and complete axiomatizations for modal logics with certain local properties. Unfortunately, those axiomatizations are rather complex. Here, we introduce far simpler axiomatizations for a wide range of local properties.
\end{abstract}

\section{Introduction}
Modal logic is a formalism used throughout computer science, artificial intelligence and philosophy to represent various concepts including knowledge or belief (\emph{epistemic/doxastic logic}, e.g., \cite{meyer_1995, fagin_1995, ditmarsch_2015}), time (\emph{temporal logic}, e.g., \cite{pnueli_1977, benthem_1995}), necessity (\emph{alethic logic}, e.g., \cite{carnap_1947, blackburn_2007}), obligation (\emph{deontic logic}, e.g., \cite{aaqvist_1981, horty_2001}) and more. The main operator of modal logic is usually denoted $\square$, or, in a multi-agent setting, $\square_a$ where $a$ is an agent index. A formula $\square \phi$ can then be read, depending on the context, as ``$\phi$ is known'', ``$\phi$ is believed'', ``$\phi$ is true in every possible future'', ``$\phi$ is necessarily true'' or ``$\phi$ is obligatory''.

The most commonly used kind of semantics for modal logic uses \emph{relational models}, also known as \emph{Kripke models}. Such a model $M$ consists of three parts, $M=(W,R,V)$, where $W$ is a set of worlds or states, $R$ is an accessibility relation on those worlds (or, in the multi-agent case, a set of relations), and $V$ is a valuation that determines on which worlds an atomic formula is true. A formula $\square \phi$ then holds on a world $w_1$ if $\phi$ is true on every $w_2$ such that $(w_1,w_2)\in R$.

Depending on the specific concept being represented, some additional assumptions are typically made, however. Let us consider two such assumptions as examples. Firstly, in epistemic logic knowledge is generally assumed to be truthful, represented by the axiom $\square_a\phi \rightarrow \phi$, which can be read as ``if agent $a$ knows $\phi$, then $\phi$ is true''. Secondly, in alethic logic it is usually assumed that everything that is necessary is necessarily necessary, represented by the axiom $\square \phi\rightarrow \square\square \phi$.

Each such axiom \emph{corresponds}, in the precise technical sense of correspondence theory \cite{sahlqvist_1975, benthem_1976}, to a constraint on models. The axiom $\square_a\phi\rightarrow \phi$ corresponds to $a$'s accessibility relation being reflexive, $\neg \square \bot$ corresponds to the accessibility relation being serial and $\square \phi\rightarrow \square\square \phi$ corresponds to the relation being transitive.

Some non-standard assumptions can be handled in the same way. Suppose, for example, that agent $b$ is smarter and more well-informed than agent $a$, and that, as a result, everything known by $a$ is also known by $b$. This can be represented by an axiom $\square_a\phi\rightarrow \square_b\phi$ and the corresponding property that the relation $R(b)$ for agent $b$ is a subset of the relation $R(a)$ for agent $a$.

One limitation of this approach, to use axioms and their corresponding model conditions, is that the properties they represent are inherently global. If we assume the axiom $\square_a\phi\rightarrow \phi$, then $a$'s accessibility relation should be reflexive throughout the model. As such, not only is $a$'s knowledge truthful, that truthfulness must be common knowledge. Similarly, if we take $\square_a\phi\rightarrow \square_b\phi$ as an axiom then $b$ knows at least as much as $a$ in every world.

One solution to this issue was introduced in \cite{ditmarsch_2009, ditmarsch_2011, ditmarsch_2012} as \emph{local properties}. A local property is, as the name implies, a ``local'' variant of a property such as reflexivity, transitivity, or one relation being a subset of another. A world $w$ is locally reflexive if $w$ is a successor of itself, it is locally transitive if whenever $w_1$ is a successor of $w$ and $w_2$ is a successor of $w_1$, $w_2$ is also a successor of $w$, and $b$ locally knows more than $a$ in w if $\{w'\mid (w,w')\in R(b)\}\subseteq \{w'\mid (w,w')\in R(a)\}$. On the syntax side, for a given local property a special symbol $\theta$ is then introduced, that holds on a world if and only if that world satisfies the local property.

The main outcome of \cite{ditmarsch_2011, ditmarsch_2012} was a method to create sound and complete axiomatizations for logic with any local properties that satisfy a technical condition known as r-persistence. Unfortunately, while these axiomatizations are indeed sound and complete, they use a relatively complex introduction rule for local properties (see Section~\ref{sec:ditmarsch_approach} for details). Here, we introduce axiomatizations for many local properties that instead use an introduction axiom, which is also considerably simpler than the rule from \cite{ditmarsch_2012}. The downside of our approach is that it is less general than the one from \cite{ditmarsch_2012}, although it does include all commonly discussed local properties and we expect that it can be generalized.

The structure of this paper is as follows. First, in Section~\ref{sec:prelim} we introduce the necessary definitions and some background results. Then, in Section~\ref{sec:ditmarsch_approach}, we describe the axiomatizations from \cite{ditmarsch_2012}. Following that, in Section~\ref{sec:case}, we introduce a sound and complete axiomatization for one local property using a case study. Finally, in Section~\ref{sec:general}, we discuss axiomatizations for other local properties.

\section{Preliminaries}
\label{sec:prelim}
\subsection{Modal logic}
Before we define the specifics of local properties, it is useful to first define the usual semantics for modal logic.
Let $\ag$ be a finite set of agents and $\prop$ a countable set of propositional atoms.
\begin{definition}
A \emph{model} $M$ is a triple $M=(W,R,V)$ where $W$ is a set of worlds, $R: \ag\rightarrow 2^{W\times W}$ assigns to each agent an accessibility relation and $V:\prop\rightarrow 2^W$ is a valuation that assigns to each atom a subset of $W$.

We also write $w R_a w'$ for $(w,w')\in R(a)$ and denote the set $\{w'\mid w R_a w'\}$ by $R_a(w)$.

A pointed model is a pair $M,w$ where $M=(W,R,V)$ and $w\in W$.

A frame is a pair $F=(W,R)$, where $R: \ag\rightarrow 2^{W\times W}$.
\end{definition}

\begin{definition}
The modal formulas are given by the following normal form:
\begin{equation*}\phi ::= p \mid \top \mid \neg \phi \mid (\phi \vee \phi) \mid \square_a\phi,\end{equation*}
where $p\in \prop$ and $a\in \ag$.

The operators $\wedge, \rightarrow$ and $\leftrightarrow$ are defined as abbreviations in the usual way. Furthermore, we use $\lozenge_a$ as an abbreviation for $\neg\square_a\neg$. When $|\ag|=1$ we omit the index $a$, writing $\square$ and $\lozenge$ for $\square_a$ and $\lozenge_a$.
\end{definition}
The semantics are as usual.
\begin{definition}
Let $M,w$ be a pointed model. The satisfaction relation $\models$ is given recursively by
$$\begin{array}{lcl}
M,w\models p & \Leftrightarrow & w\in V(p)\\
M,w\models \neg \phi & \Leftrightarrow & M,w\not\models \phi\\
M,w\models \phi_1\vee \phi_2 & \Leftrightarrow & M,w\models \phi_1 \text{ or } M,w\models \phi_2\\
M,w\models \square_a\phi & \Leftrightarrow & \forall w'\in R_a(w): M,w'\models \phi
\end{array}$$
If $M,w\models \phi$ for every $w\in W$, we write $M\models \phi$. If $M\models \phi$ for every $M$ we say that $\phi$ is \emph{valid} and write $\models \phi$.

If $F=(W,R)$ is a frame, we say that $F,w\models \phi$ if $(W,R,V),w\models \phi$ for every $V$.
\end{definition}
The usual proof system \K is as follows.
\begin{definition}
The proof system \K is given by the following axioms and rules.
\begin{center}
\begin{tabular}{ll}
T & any substitution instance of a validity of propositional logic\\
K & $\square_a (\phi\rightarrow \psi)\rightarrow (\square_a\phi\rightarrow \square_a\psi)$\\
Nec & From $\psi$, infer $\square_a\psi$\\
MP & From $\phi \rightarrow \psi$ and $\phi$, infer $\psi$.
\end{tabular}
\end{center}
\end{definition}
It is well known that \K is sound and strongly complete for modal logic. All axiomatizations that we discuss in this paper extend \K.

Finally, we should define bisimulations, as these will become important later.
\begin{definition}
Let $M_1=(W_1,R_1,V_1)$ and $M_2=(W_2,R_2,V_2)$ be models. A \emph{bisimulation} is a relation $\sim \subseteq W_1\times W_2$ such that the following three properties hold.
\begin{description}
	\item[Atomic agreement] If $w_1\sim w_2$ then for all $p\in \prop$, $w_1\in V(p)$ iff $w_2\in V(p)$.
	\item[Forth] If $w_1\sim w_2$ and $(w_1,w_1')\in R_1(a)$, then there is a $w_2'\in W_2$ such that $(w_2,w_2')\in R_2(a)$ and $w_1'\sim w_2'$.
	\item[Back] If $w_1\sim w_2$ and $(w_2,w_2')\in R_2(a)$, then there is a $w_1'\in W_1$ such that $(w_1,w_1')\in R_1(a)$ and $w_1'\sim w_2'$.
\end{description}
\end{definition}
Famously, modal logic is the bisimulation-invariant fragment of first-order logic \cite{benthem_1976}. In particular, modal logic is invariant under bisimulation, so if $w_1\sim w_2$ then the two worlds satisfy the same modal formulas.

\subsection{Local properties}
\newcommand{\Thetaser}{\Theta_\mathit{ser}}
\newcommand{\thetaser}{\theta_\mathit{ser}}
Now we can define local properties, and their interaction with modal logic.
\begin{definition}
Let $X$ be a set of first order variables. A \emph{local property} is a formula with one free variable in the first order language given by 
\begin{equation*}\Theta ::= (x, y)\in R(a) \mid x=y\mid \neg \Theta \mid (\Theta \vee \Theta) \mid \forall x\Theta,\end{equation*}
where $a\in \ag$ and $x,y\in X$.
\end{definition}
We follow \cite{ditmarsch_2012} in this definition, by allowing only relational predicates, $(x,y)\in R(a)$, and not valuation predicates $x\in V(p)$. This restriction is not necessary for our analysis, but it seems harmless, given that we do not know of any interesting local properties that depend on such valuation predicates.
A local property is a first order formula, so it can be evaluated on models, where we take the set $W$ of worlds to be the domain of quantification. Furthermore, because we restrict ourselves to relational predicates, the valuation does not affect the value of any local property, so we can also evaluate a local property on the frame underlying a model instead of on the model itself.

We will focus primarily on local properties that are not invariant under bisimulation. This is not because formulas that are invariant under bisimulation are necessarily uninteresting, but because any such property is equivalent to a formula of modal logic. Consider, for example, the property of local seriality, $\Thetaser = \exists x (w,x)\in R$. We have $\Thetaser(w)$ if and only if $M,w\models \lozenge \top$, so if we wish to reason about local seriality we need not introduce a special symbol $\thetaser$ but can instead reason about $\lozenge \top$.

If we wish to reason about a given local property $\Theta$ in modal logic, we add an extra symbol $\theta$ to the language of modal logic. On a technical level, $\theta$ can be seen as either a nullary modality or a designated propositional atom. In order to emphasize its special role, we denote its extension separately in our notation for models. Specifically, we write $M=(W,R,\Delta,V)$, where $\Delta\subseteq W$ and $M,w\models \theta$ iff $w\in \Delta$. If we look at multiple local properties at a time, each is represented by a different symbol, i.e., $M=(W,R,\Delta_1,\cdots, \Delta_k, V)$, with $M,w\models \theta_i$ iff $w\in \Delta_i$. Because $\Delta$ is simply a subset of $W$, there is no inherent guarantee that the atom $\theta$ is in any way related to the property $\Theta$. In order to force this connection, we look at models that are in \emph{harmony}.

\begin{definition}
A model $M=(W,R,\Delta,V)$ is in \emph{$\Theta$-$\theta$-harmony} if for every $w\in W$, we have $w\in \Delta$ if and only if $\Theta(w)$.
\end{definition}
When $\Theta$ and $\theta$ are clear from context we omit reference to them, and simply say that a model is in harmony.
Our goal, and the goal of \cite{ditmarsch_2011, ditmarsch_2012}, is to find axiomatizations that are sound and complete for the class of models that are in harmony.

Finally, we need the notion of a modal formula locally defining a first-order property.
\begin{definition}
Let $\phi$ be a schema of modal logic and $\Theta$ a local property. The formula $\phi$ \emph{locally defines} $\Theta$ if for every frame $F=(W,R)$ and every $w\in W$, we have $F,w\models \phi$ if and only if $\Theta(w)$.
\end{definition}
Generally, the schema that we will use to locally define a property is the same one that globally corresponds to that property. So, for example, the property of (local) transitivity is locally defined by $\square \psi\rightarrow \square\square \psi$, and (local) reflexivity is locally defined by $\square\psi\rightarrow \psi$. 
If $\varphi(p_{1},\ldots,p_{n})$ is a modal formula constructed from the propositional atoms $p_{1},\ldots,p_{n}$ then for all modal formulas $\chi_{1},\ldots,\chi_{n}$, $\varphi(\chi_{1},\ldots,\chi_{n})$ will denote the modal formula obtained from $\varphi(p_{1},\ldots,p_{n})$ by respectively replacing the occurrences of $p_{1},\ldots,p_{n}$ by $\chi_{1},\ldots,\chi_{n}$.

\section{Axiomatizations using an inference rule}
\label{sec:ditmarsch_approach}
In order for the method from \cite{ditmarsch_2012} to apply for a local property $\Theta$, we require two things. Firstly, we must have a schema $\phi$ that locally defines $\Theta$. Secondly, $\phi$ should be locally r-persistent. Defining this property requires a lot of further definitions, so for practical reasons we will not give such a definition here and instead refer the reader to \cite{goranko_1998} for details. We will note, however, that the axioms corresponding to the usual properties (transitivity, reflexivity, etc.) are locally r-persistent.

If these conditions are satisfied, the axiomatizations from \cite{ditmarsch_2012} work by adding an \emph{elimination axiom} and an \emph{introduction rule} for $\theta$ to the proof system \K. The elimination axiom E$\theta$  is quite simple.
\begin{center}\begin{tabular}{ll} E$\theta$ & \hspace{15pt}$\theta\rightarrow\phi$,\end{tabular}\end{center}
\newcommand{\Thetaeuc}{\Theta_\mathit{Euc}}
\newcommand{\thetaeuc}{\theta_\mathit{Euc}}
where $\phi$ is the schema that locally defines $\Theta$. Consider, for example, local Euclidicity, formally given by $\Thetaeuc = \forall x,y(((w,x)\in R(a) \wedge (w,y)\in R(a))\rightarrow (x,y)\in R(a))$. This property is locally defined by the schema $\lozenge_a \psi\rightarrow \square_a\lozenge_a \psi$, i.e., negative introspection. Hence the elimination axiom E$\thetaeuc$ is given by $\thetaeuc\rightarrow (\lozenge_a \psi\rightarrow \square_a\lozenge_a \psi)$. This, of course, is exactly as desired, since it means that in every world where $\thetaeuc$ holds, the agent $a$ is capable of negative introspection.

The introduction rule is more complex, and before we can formally state it we first need to introduce \emph{pseudo-modalities}. Let $s = x_1,\cdots,x_n$ be a (possibly empty) finite sequence of pairs $x_i = (a_i,\chi_i)$, where $a_i\in \agents$ and $\chi_i$ is a formula of modal logic. The pseudo-modality $[s]$ is then an abbreviation, where $[s] \psi$ stands for $\chi_1\rightarrow \square_{a_1}(\chi_2\rightarrow \square_{a_2}(\chi_3 \rightarrow \square_{a_3}(\cdots \square_{a_{n-1}}(\chi_n \rightarrow \square_{a_n} \psi))$.

Let $k$ be the number of different schematic variables in $\phi$. Then the introduction rule I$\theta$ is as follows.
\begin{center}\begin{tabular}{ll}I$\theta$ & \hspace{15pt} from $[s]\phi(p_1,\cdots,p_k)$, infer $[s]\theta$,\end{tabular}\end{center}
where $[s]$ is a pseudo-modality and $p_1,\cdots,p_k$ are fresh atoms.

Consider again the property $\Thetaeuc$. By taking $s$ to be the empty sequence, the rule I$\thetaeuc$ allows us to infer $\thetaeuc$ from $\lozenge_a p \rightarrow \square_a\lozenge_a p$. By taking a non-empty sequence $s$, for example $s= (\psi_1, b), (\psi_2,a), (\psi_3,c)$, it also allows us to infer
\[\psi_1\rightarrow \square_b (\psi_2\rightarrow \square_a (\psi_3\rightarrow \square_c \thetaeuc))\]
from
\[\psi_1\rightarrow \square_b (\psi_2\rightarrow \square_a (\psi_3\rightarrow \square_c (\lozenge_a p \rightarrow \square_a\lozenge_ap))),\]
as long as $p$ does not occur in $\psi_1, \psi_2$ and $\psi_3$.

The main result from \cite{ditmarsch_2012} is that if $\Theta_1,\cdots, \Theta_n$ are local properties that satisfy the required conditions, then the proof system $\K + \text{E}\theta_1 + \text{I}\theta_1 + \cdots + \text{E}\theta_n + \text{I}\theta_n$ is sound and complete for the class of models that are in  $\Theta_i$-$\theta_i$-harmony for all $1\leq i \leq n$.

\section{Case study: transitivity}
\label{sec:case}
\newcommand{\Thetatrans}{\Theta_\mathit{tr}}
\newcommand{\thetatrans}{\theta_\mathit{tr}}
Here, we introduce alternative axiomatizations for local properties. We use the same elimination axiom, but instead of an introduction rule we use an introduction axiom. Furthermore, our introduction axiom is syntactically simpler than the rules from \cite{ditmarsch_2012}. Our axiomatization is based on the observation that most of the local properties that seem interesting (including, but not limited to, the examples from \cite{ditmarsch_2009, ditmarsch_2011, ditmarsch_2012}) are not merely \emph{not preserved} under bisimilarity, but \emph{anti-preserved}, in the sense that apart from some trivial exceptions,\footnote{These exceptions apply when a world has no successors. For example, if $\forall x \neg (w,x)\in R$, then $w \sim w'$ implies that $w'$ is locally Euclidean, as well as locally transitive, locally dense, et cetera.} for every $M$ and $w$ such that $\Theta(w)$, there are $M',w'$ such that $w\sim w'$ and $\neg\Theta(w')$.

In this section we use the property of local transitivity to show how we can leverage this anti-preservation in order to obtain a sound and strongly complete axiomatization. So take \[\Thetatrans = \forall x,y (((w,x)\in R \wedge (x,y)\in R)\rightarrow (w,y)\in R).\]
Our elimination axiom for $\thetatrans$ is the same as that of \cite{ditmarsch_2012}.
\begin{center}\begin{tabular}{ll}
E$\thetatrans$ & \hspace{15pt}$\thetatrans\rightarrow (\square \psi\rightarrow \square\square\psi).$
\end{tabular}\end{center}
Furthermore, it is clear that when the antecedent of the implication in $\Thetatrans$ is not satisfied, $\Thetatrans$ trivially holds. So if $\forall x,y \neg ((w,x)\in R \wedge (x,y)\in R)$, then $\Thetatrans(w)$. Unlike $\Thetatrans$ itself, this latter property is invariant under bisimulation. In fact, it is equivalent to the modal formula $\square\square\bot$. The following \emph{trivial introduction axiom} is therefore sound.
\begin{center}\begin{tabular}{ll}TI$\thetatrans$ & \hspace{15pt} $\square\square\bot\rightarrow \thetatrans$\end{tabular}\end{center}
We will show that E$\thetatrans$ and TI$\thetatrans$ suffice, so $\K + \text{E}\thetatrans + \text{TI}\thetatrans$ is sound and complete for the class of models that are in $\Thetatrans$-$\thetatrans$-harmony. For an informal overview of why we do not need any further introduction axioms, note that for every pointed model $M,w$, if $M,w\not \models \square\square \bot$ and $\Thetatrans(w)$ then there is a bisimilar model $M',w'$ such that $\neg\Thetatrans(w')$. In particular, the tree unravelling of $M$ will have that property.

We will show that, as a consequence, no introduction axiom $\chi\rightarrow \thetatrans$ can be sound unless $\chi$ contains $\thetatrans$ (which would make it a rather useless introduction axiom) or $\chi$ implies $\square\square\bot$ (which would render the axiom superfluous in the presence of TI$\thetatrans$). So suppose towards a contradiction that there is some modal formula $\chi$ such that (i) $\thetatrans$ does not occur in $\chi$, (ii) $\chi$ does not imply $\square\square\bot$ and (iii) $\chi\rightarrow \thetatrans$ is valid on models that are in harmony. Then there is some $M,w$ such that $M,w\not\models\square\square\bot$ and $M,w\models \chi$. Take $M',w'$ to be the bisimilar model such that $\neg \Thetatrans(w')$. Because $\chi$ is a modal formula, it is invariant under bisimulation. Hence $M',w'\models \chi$. Furthermore, because $\chi$ does not contain $\thetatrans$, we can choose $M'$ to be in harmony. Since $\neg \Thetatrans(w')$ holds this implies that $M',w'\not\models \thetatrans$, which contradicts the soundness of $\chi\rightarrow \thetatrans$ on models in harmony.

From this contradiction, it follows that no axiom $\chi\rightarrow \thetatrans$ can be sound unless $\chi$ contains the symbol $\thetatrans$ or $\chi$ implies $\square\square\bot$. This does not fully suffice to prove that $\K+\text{E}\thetatrans+\text{TI}\thetatrans$ is complete for models in harmony, but it does show why we should not expect to need any further introduction axioms.

In order to turn this informal proof sketch into a full proof, let us first introduce one further auxiliary definition.
\begin{definition}
A model $M$ is \emph{$\Thetatrans$-$\thetatrans$-nice} if
\begin{enumerate}
	\item for every $w$, if $M,w\models \thetatrans$ then $\Thetatrans(w)$ and
	\item if $M,w\models \square\square\bot$ then $M,w\models\thetatrans$.
\end{enumerate}
\end{definition}
Every model that is in harmony is also nice, but not necessarily vice versa. It is also quite easy to see that $\K + \text{E}\thetatrans + \text{TI}\thetatrans$ is sound and strongly complete for the class of nice models, since E$\thetatrans$ directly corresponds to the first condition of niceness and TI$\thetatrans$ corresponds to the second condition. This completeness can be proven using the standard canonical model construction. We will show that for every nice model, there is a bisimilar harmonious model.

\begin{definition}
Let $M=(W,R,\Delta,V)$ be a nice model. We define a sequence of models as follows:
\begin{itemize}
	\item $M_0=(W',R_0,\Delta',V')$ is the tree unravelling of $M$, i.e., 
    \begin{itemize}
        \item $W'$ is the set of finite sequences $w' = (w_1,\cdots, w_n)$ such that (i) $w_j\in W$ for all $1\leq j \leq n$ and (ii) $(w_j,w_j+1)\in R$ for all $1\leq j < n$,
        \item $(w_1',w_2')\in R_0$ if and only if $w_1' = (w_1,\cdots, w_n)$ and $w_2' = (w_1,\cdots, w_n, w_{n+1})$ for some $w_1,\cdots, \allowbreak w_{n+1}\in W$,
        \item $(w_1,\cdots, w_n)\in \Delta'$ if and only if $w_n\in \Delta$,
        \item $(w_1,\cdots, w_n) \in V'(p)$ if and only if $w_n\in V(p)$.
    \end{itemize}
	\item $M_{i+i}= (W',R_{i+1},\Delta',V')$, where $(w_1',w_2')\in R_{i+1}$ if and only if 
	\begin{itemize}
		\item $(w_1',w_2')\in R_i$ or
		\item $w_1'\in \Delta'$ and there is a $w_3'$ such that $(w_1',w_3')\in R_i$ and $(w_3',w_2')\in R_i$.
	\end{itemize}
	\item $M_\infty = (W',R_\infty,\Delta',V')$, where $R_\infty=\bigcup_{i\in \mathbb{N}}R_i$.
\end{itemize}
\end{definition}
Because $M_0$ is a tree model, no world is locally transitive (unless none of its successors have a successor). We then add additional edges to the relation, but only where needed to make every $w'\in \Delta'$ locally transitive. As a consequence, we will be able to show that there is a total bisimulation between $M$ and $M_\infty$, and that $M_\infty$ is in harmony.

Note that all models $M_i$, for $i\in \mathbb{N}\cup\{\infty\}$ use the same set $W'$ of worlds, and that this world was obtained by taking the tree unravelling of $M$. As such, every $w'=(w_1,\cdots,w_n)\in W'$ has a unique original, namely $w_n$, in $W$.

\begin{lemma}
Take any $x',y'\in W'$ and let $x,y\in W$ be the originals of $x'$ and $y'$, respectively. If $(x',y')\in R_i$ for some $i\in \mathbb{N}\cup \{\infty\}$, then $(x,y)\in R$.
\end{lemma}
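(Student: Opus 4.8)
The claim asserts that every edge in any $R_i$ (and in $R_\infty$) projects down to an edge in the original relation $R$, via the "original" map $(w_1,\dots,w_n) \mapsto w_n$. The base case $i = 0$ follows directly from the definition of the tree unravelling: if $(x',y') \in R_0$, then $x' = (w_1,\dots,w_n)$ and $y' = (w_1,\dots,w_n,w_{n+1})$ for some worlds, so the original of $x'$ is $w_n$, the original of $y'$ is $w_{n+1}$, and by condition (ii) in the definition of $W'$ we have $(w_n,w_{n+1}) \in R$, which is exactly $(x,y) \in R$.

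For the inductive step, suppose the claim holds for $R_i$ and consider $(x',y') \in R_{i+1}$. By definition there are two cases. In the first case, $(x',y') \in R_i$, and we are done immediately by the induction hypothesis. In the second case, $x' \in \Delta'$ and there is some $z' \in W'$ with $(x',z') \in R_i$ and $(z',y') \in R_i$. Let $x, z, y$ be the originals of $x', z', y'$. By the induction hypothesis applied twice, $(x,z) \in R$ and $(z,y) \in R$. Now I use the key structural fact: since $M$ is nice, condition 1 of niceness together with $x' \in \Delta'$ — which by definition of $\Delta'$ means $x \in \Delta$, i.e. $M,x \models \thetatrans$ — gives $\Thetatrans(x)$, i.e. $x$ is locally transitive in $M$. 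Hence from $(x,z) \in R$ and $(z,y) \in R$ we conclude $(x,y) \in R$, as required.

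**The main subtlety** is the treatment of $i = \infty$: an edge $(x',y') \in R_\infty = \bigcup_{i \in \mathbb{N}} R_i$ lies in $R_i$ for some finite $i$, so the finite-$i$ case already covers it; I would just note this reduction explicitly rather than treat $\infty$ separately. The only genuinely load-bearing step is linking $x' \in \Delta'$ back to local transitivity of its original $x$ in $M$, and this is precisely where niceness (rather than full harmony) of $M$ is used — it is worth flagging that we only need the "$\thetatrans \Rightarrow \Thetatrans$" direction of niceness here, not harmony. Everything else is bookkeeping about the tree-unravelling's original map.
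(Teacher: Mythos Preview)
Your proof is correct and follows essentially the same approach as the paper: induction on $i$, with the base case handled by the definition of the tree unravelling, the inductive step split into the trivial case $(x',y')\in R_i$ and the case where the edge was freshly added (using $x'\in\Delta'\Rightarrow x\in\Delta$ and niceness to get local transitivity of $x$), and the case $i=\infty$ reduced to some finite $i$. The only cosmetic difference is that the paper phrases the induction as ``assume the claim for all $i'<i$'' rather than ``from $i$ to $i+1$'', which makes no substantive difference here.
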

\begin{proof}
By induction on $i$. As base case, suppose that $i=0$. Then the lemma follows immediately from the fact that $M_0$ is the tree unravelling of $M$. Suppose then, as induction hypothesis, that $i\in \mathbb{N}_{>0}$ and that the lemma holds for all $i'<i$.

Take any $(x',y')\in R_i$. If already $(x', y')\in R_{i-1}$, then by the induction hypothesis we have $(x, y)\in R$, as was to be shown. 

If $(x',y')\in R_i\setminus R_{i-1}$, then the arrow must have been added in stage $i$, meaning that $x'\in \Delta'$ and there is some $z'$ such that $(x',z')\in R_{i-1}$ and $(z',y')\in R_{i-1}$. Then, by the induction hypothesis, $(x,z)\in R$ and $(z,y)\in R$, where $z$ is the original of $z'$. Furthermore, because $M_0$ is the tree unravelling of $M$, and all $M_i$ use the same set $\Delta'$, it follows from $x'\in \Delta'$ that $x\in \Delta$. 

From the fact that $M$ is nice, it then follows that $\Thetatrans(w)$, so $w$ is locally transitive. Because $(x,z)\in R$ and $(z,y)\in R$, we then have $(x,y)\in R$, as was to be shown.
This completes the induction step for $i\in \mathbb{N}_{>0}$. Finally, if $(x',y')\in R_\infty$, then $(x',y')\in R_j$ for some $j\in \mathbb{N}$, and therefore $(x,y)\in R$.
\end{proof}

Now, we can take the important step of showing bisimilarity.
\begin{lemma}
Let $\sim \subseteq W\times W'$ be the relation such that $x\sim y'$ if and only if $x$ is the original of $y'$. Then $\sim$ is a bisimulation between $M$ and $M_i$, for every $i\in \mathbb{N}\cup \{\infty\}$.
\end{lemma}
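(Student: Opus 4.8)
The plan is to verify the three bisimulation conditions directly, using the preceding lemma for the single step that is not purely combinatorial. Fix $i\in\mathbb{N}\cup\{\infty\}$ and suppose $x\sim y'$, so that $y'=(w_1,\dots,w_n)$ for some $w_1,\dots,w_n\in W$ and $x=w_n$ is its original. Atomic agreement is then immediate from the construction of $M_0$ together with the fact that all $M_i$ share the same $V'$ and $\Delta'$: $y'\in V'(p)$ iff $w_n\in V(p)$ iff $x\in V(p)$, and likewise $y'\in\Delta'$ iff $x\in\Delta$, so the designated symbol $\theta$ is preserved as well. This part uses nothing about $R_i$.

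For the \emph{forth} condition, suppose $(x,z)\in R$. I would take the one-step extension $z'=(w_1,\dots,w_n,z)$. This is a legitimate element of $W'$ exactly because $(w_n,z)=(x,z)\in R$, and $(y',z')\in R_0$ holds by the definition of the tree unravelling. Since each $M_{i+1}$ is obtained from $M_i$ only by \emph{adding} edges, we have $R_0\subseteq R_i$ for every $i\in\mathbb{N}\cup\{\infty\}$, and hence $(y',z')\in R_i$; as $z$ is the original of $z'$, we get $z\sim z'$. Note that this direction relies only on the shape of $M_0$ and on the monotonicity of the sequence $(R_i)_i$, not on niceness of $M$.

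For the \emph{back} condition, suppose $(y',z')\in R_i$ and let $z$ be the (unique) original of $z'$. Here I would invoke the previous lemma: from $(y',z')\in R_i$ it yields $(x,z)\in R$, where $x$ and $z$ are the originals of $y'$ and $z'$, and $z\sim z'$ holds by definition of $\sim$. This is the only place in the argument where niceness of $M$ enters, and it does so entirely through the proof of that lemma.

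I do not expect a genuine obstacle here: the substantive content is packaged in the preceding lemma, \emph{forth} is a routine feature of tree unravellings, and atomic agreement is definitional. The only points that warrant a moment's care are (a) observing that the \emph{forth} step works uniformly for all $i$ precisely because edges are only ever added, so that one does not need a separate argument for the limit model $M_\infty$, and (b) being explicit that ``atomic agreement'' is read here as covering $\theta$ too, since it is this bisimulation that will later let us transfer the harmony of $M_\infty$ back to $M$ along $\sim$.
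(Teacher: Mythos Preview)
Your proof is correct and follows essentially the same approach as the paper: atomic agreement by construction of the unravelling, \emph{forth} by extending the path and using $R_0\subseteq R_i$, and \emph{back} by invoking the preceding lemma. Your explicit remark that $\theta$ is preserved as well (via $\Delta'$) is a helpful addition that the paper leaves implicit but needs for the later transfer of harmony.
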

\begin{proof}
We show that $\sim$ satisfies the three conditions of bisimulation. Take any $w_1\in W$ and $w_1'\in W'$ such that $w_1\sim w_1'$.
\begin{description}
	\item[Atoms] Because $M_0$ is the tree unraveling of $M$, and $M_i$ and $M'$ only differ from $M_0$ by the addition of edges, we have $w_1\in V(p)$ iff $w_1'\in V'(p)$.
	\item[Forth] Take any $w_2$ such that $(w_1,w_2)\in R$. As $M_0$ is the tree unraveling of $M$, there is some $w_2'\in W'$ such that $(w_1',w_2')\in R_0$ and $w_2$ is the original of $w_2'$ (and therefore $w_2\sim w_2'$). Furthermore, $(w_1',w_2')\in R_0$ implies $(w_1',w_2')\in R_i$ for every $i\in\mathbb{N}\cup\{\infty\}$.
	\item[Back] Take any $w_2'$ such that $(w_1',w_2')\in R_i$ for some $i\in \mathbb{N}\cup\{\infty\}$. Then by the preceding lemma
	we have $(w_1,w_2)\in R$, where $w_2$ is the original of $w_2'$. Hence we also have $w_2\sim w_2'$.
\end{description}
\end{proof}

Left to show is that $M_\infty$ is $\Thetatrans$-$\thetatrans$-harmonious.
\begin{lemma}
If $x'\in \Delta'$, $(x',y')\in R_\infty$ and $(y',z')\in R_\infty$ then $(x',z')\in R_\infty$.
\end{lemma}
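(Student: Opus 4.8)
The plan is to exploit the fact that the sequence of relations $R_0, R_1, R_2, \ldots$ is increasing and that $R_\infty$ is its union, so that any finite set of $R_\infty$-edges already lives in a single $R_i$. First I would observe that $R_i \subseteq R_{i+1}$ for every $i \in \mathbb{N}$: this is immediate from the defining clause of $R_{i+1}$, whose first disjunct is exactly ``$(w_1',w_2')\in R_i$''. Consequently the chain is monotone, and for any two edges $(x',y'), (y',z') \in R_\infty = \bigcup_{i\in\mathbb{N}} R_i$ there are indices $j,k$ with $(x',y')\in R_j$ and $(y',z')\in R_k$; setting $i = \max(j,k)$ and using monotonicity gives $(x',y')\in R_i$ and $(y',z')\in R_i$ simultaneously.

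Now I would apply the construction step one more time. Since all models $M_m$ (for $m\in\mathbb{N}\cup\{\infty\}$) share the same set $\Delta'$, the hypothesis $x'\in\Delta'$ holds at stage $i$ as well. Taking $w_3' = y'$ in the second disjunct of the definition of $R_{i+1}$, from $x'\in\Delta'$, $(x',y')\in R_i$ and $(y',z')\in R_i$ we conclude $(x',z')\in R_{i+1}$. Finally, $R_{i+1}\subseteq R_\infty$, so $(x',z')\in R_\infty$, which is what was to be shown.

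I do not expect any genuine obstacle here; the argument is a short chase through the definitions. The only points that require a moment's care are (a) checking the monotonicity $R_i\subseteq R_{i+1}$, which is needed to collapse the two witnessing indices into one, and (b) noting that membership in $\Delta'$ is the same in every $M_m$, so that the local transitivity step available at stage $i+1$ genuinely applies to the world $x'$. Everything else is bookkeeping about the union $R_\infty = \bigcup_{i} R_i$.
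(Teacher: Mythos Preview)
Your proposal is correct and follows essentially the same approach as the paper's proof: find a single index $i$ with both edges in $R_i$, then invoke the defining clause of $R_{i+1}$ using $x'\in\Delta'$ to conclude $(x',z')\in R_{i+1}\subseteq R_\infty$. The paper is merely terser, asserting the existence of a common $i$ without spelling out the monotonicity and $\max$ argument that you make explicit.
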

\begin{proof}
If $(x',y')\in R_\infty$ and $(y',z')\in R_\infty$, then there is some $i\in \mathbb{N}$ such that $(x',y')\in R_i$ and $(y',z')\in R_i$. As $x'\in \Delta'$, this implies that $(x',z')\in R_{i+1}$, and therefore $(x',z')\in R_\infty$.
\end{proof}
\begin{lemma}
If $x'\not \in \Delta'$, then there are $y',z'$ such that $(x',y')\in R_\infty$ and $(y',z')\in R_\infty$ while $(x',z')\not\in R_\infty$.
\end{lemma}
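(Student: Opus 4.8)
The plan is to exploit the fact that the construction never adds an edge out of a world lying outside $\Delta'$: consequently the set of $R_\infty$-successors of $x'$ coincides with its set of successors in the tree unravelling $M_0$, and in a tree it is trivial to find two consecutive edges out of $x'$ whose ``composition'' is not itself an edge, provided only that $x'$ has a successor that in turn has a successor.

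First I would unpack the hypothesis $x'\notin\Delta'$. Writing $x'=(w_1,\cdots,w_n)$, the definition of $\Delta'$ gives $w_n\notin\Delta$, so by the contrapositive of the second niceness condition we have $M,w_n\not\models\square\square\bot$. Hence there is a $w_{n+1}$ with $(w_n,w_{n+1})\in R$ and $M,w_{n+1}\not\models\square\bot$, and therefore a $w_{n+2}$ with $(w_{n+1},w_{n+2})\in R$. Setting $y'=(w_1,\cdots,w_n,w_{n+1})$ and $z'=(w_1,\cdots,w_n,w_{n+1},w_{n+2})$, both sequences lie in $W'$, and by the definition of $M_0$ we get $(x',y')\in R_0\subseteq R_\infty$ and $(y',z')\in R_0\subseteq R_\infty$.

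Next I would show $(x',z')\notin R_\infty$. A routine induction on $i$ establishes that $\{v'\mid(x',v')\in R_i\}=\{v'\mid(x',v')\in R_0\}$ for every $i\in\mathbb{N}$: passing from $R_i$ to $R_{i+1}$ can only add a pair $(x',v')$ when $x'\in\Delta'$, which fails here. Taking the union over $i$ yields $\{v'\mid(x',v')\in R_\infty\}=\{v'\mid(x',v')\in R_0\}$. But every $R_0$-successor of $x'$ is a sequence of length $n+1$, whereas $z'$ has length $n+2$, so $(x',z')\notin R_0$ and hence $(x',z')\notin R_\infty$, which is exactly the claim.

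There is essentially nothing hard here; the only point that genuinely uses the assumptions is the existence of the ``successor of a successor'' needed to build $y'$ and $z'$. In the tree unravelling of an arbitrary model, $x'$ could have successors none of which has a successor, in which case $x'$ would be vacuously locally transitive in $M_\infty$; it is precisely the second niceness condition (equivalently, the axiom TI$\thetatrans$) that rules out this degenerate case, which is why the lemma is stated for nice models.
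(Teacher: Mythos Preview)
Your proposal is correct and follows essentially the same route as the paper's own proof: use niceness to find a two-step path in $M$ starting at the original of $x'$, lift it to $M_0$ via the tree unravelling, and then observe that since $x'\notin\Delta'$ the construction never creates a new edge out of $x'$, so $(x',z')\notin R_0$ forces $(x',z')\notin R_\infty$. You are simply more explicit than the paper about the shape of the sequences and about the induction showing that the $R_\infty$-successors of $x'$ are exactly its $R_0$-successors.
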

\begin{proof}
From $x'\not \in \Delta'$ it follows that $x\not\in \Delta$. Because $M$ is nice, this implies that there are $y,z$ such that $(x, y)\in R$ and $(y,z)\in R$. Then there are $y',z'\in W'$ such that $(x',y')\in R_0$ and $(y',z')\in R_0$. As $M_0$ is a tree model, we have $(x',z')\not\in R_0$. Furthermore, because $x'\not\in \Delta'$, no edge from $x'$ to $z'$ is added in any model $M_i$. Hence $(x',z')\not\in R_\infty$.
\end{proof}
\begin{corollary}
The model $M'=M_\infty$ is $\Thetatrans$-$\thetatrans$-harmonious.
\end{corollary}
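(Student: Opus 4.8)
The plan is to read the harmony biconditional directly off the two lemmas just established. Recall that, by definition, $M_\infty = (W', R_\infty, \Delta', V')$ is in $\Thetatrans$-$\thetatrans$-harmony iff for every $x' \in W'$ one has $x' \in \Delta'$ precisely when $\Thetatrans(x')$ holds in $M_\infty$; and $\Thetatrans(x')$ holds in $M_\infty$ iff for all $y', z' \in W'$, if $(x',y') \in R_\infty$ and $(y',z') \in R_\infty$ then $(x',z') \in R_\infty$. So it suffices to prove this equivalence pointwise, and both directions fall out of the two preceding lemmas.

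For the direction $x' \in \Delta' \Rightarrow \Thetatrans(x')$ I would simply invoke the lemma asserting that if $x' \in \Delta'$, $(x',y') \in R_\infty$ and $(y',z') \in R_\infty$, then $(x',z') \in R_\infty$ — that is exactly local transitivity of $x'$ in $M_\infty$. For the converse I would argue contrapositively: if $x' \notin \Delta'$, then the lemma producing witnesses $y', z'$ with $(x',y') \in R_\infty$, $(y',z') \in R_\infty$ but $(x',z') \notin R_\infty$ shows that $\Thetatrans(x')$ fails in $M_\infty$; hence $\Thetatrans(x')$ forces $x' \in \Delta'$. Combining the two directions yields $x' \in \Delta' \iff \Thetatrans(x')$ for every $x' \in W'$, which is precisely the definition of $\Thetatrans$-$\thetatrans$-harmony for $M_\infty$.

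I do not expect any real obstacle here: the corollary is just the packaging of these two lemmas into the harmony definition, and all the genuine content lies in the lemmas themselves — together with the earlier bisimulation lemma, which is what makes $M_\infty$ a legitimate stand-in for the original nice model $M$. If anything, the one subtle point, already handled in the second of the two lemmas, is the use of niceness of $M$ to obtain a two-step $R$-path out of every world whose original lies outside $\Delta$, so that such a world genuinely violates local transitivity in $M_\infty$ rather than vacuously satisfying it.
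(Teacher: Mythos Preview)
Your proposal is correct and matches the paper's approach exactly: the paper states this as a bare corollary with no proof, precisely because the two preceding lemmas supply the two directions of the harmony biconditional, just as you spell out. Your added remark about niceness being the hidden ingredient in the second lemma is accurate and already handled there, so nothing further is needed.
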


It now follows quite easily that $\K+\text{E}\thetatrans+\text{TI}\thetatrans$ is sound and complete.
\begin{corollary}
The axiomatization $\K+\text{E}\thetatrans+\text{TI}\thetatrans$ is sound and strongly complete for the class of models that are $\Thetatrans$-$\thetatrans$-harmonious.
\end{corollary}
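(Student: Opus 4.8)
The plan is to assemble the lemmas and the corollary about $M_\infty$ already established, together with a standard canonical-model argument. \textbf{Soundness} of $\K$ is standard, so only E$\thetatrans$ and TI$\thetatrans$ need checking on harmonious models. For E$\thetatrans$: if $M=(W,R,\Delta,V)$ is in harmony and $M,w\models\thetatrans$, then $w\in\Delta$, hence $\Thetatrans(w)$, i.e. $w$ is locally transitive; since the schema $\square\psi\to\square\square\psi$ locally defines local transitivity and the correspondence argument ($wRu$, $uRv$ imply $wRv$, so $\square\psi$ at $w$ forces $\square\square\psi$ at $w$) is insensitive to the interpretation of symbols, $M,w\models\square\psi\to\square\square\psi$ for every $\psi$. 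For TI$\thetatrans$: if $M,w\models\square\square\bot$ then no $R$-successor of $w$ has an $R$-successor, so the antecedent of $\Thetatrans$ is vacuously false, $\Thetatrans(w)$ holds, and by harmony $w\in\Delta$, i.e. $M,w\models\thetatrans$.

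For \textbf{strong completeness}, the first step is to record — as the surrounding text indicates — that $\K+\text{E}\thetatrans+\text{TI}\thetatrans$ is sound and strongly complete for the class of \emph{nice} models, via the standard canonical model $M^c$. One checks $M^c$ is nice: TI$\thetatrans$ directly forces condition 2 (if $\square\square\bot\in w$ then $\thetatrans\in w$), and E$\thetatrans$ forces condition 1 by the familiar localized $\mathbf{K4}$-style argument — if $\thetatrans\in w$ but $wRu$, $uRv$ with $v$ not an $R$-successor of $w$, pick $\psi$ with $\square\psi\in w$ and $\psi\notin v$; then $\square\psi\in w$ while $\square\square\psi\notin w$, contradicting the instance $\thetatrans\to(\square\psi\to\square\square\psi)$ of E$\thetatrans$ in $w$. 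Hence every $\K+\text{E}\thetatrans+\text{TI}\thetatrans$-consistent set $\Gamma$ is satisfiable in a nice pointed model $M,w$.

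The second step converts $M,w$ into a harmonious model. Apply the construction of the preceding definition to the nice model $M$, obtaining $M_\infty=(W',R_\infty,\Delta',V')$; this is legitimate for the (possibly infinite) canonical model, which is why strong, not merely weak, completeness follows. By the bisimulation lemma, the relation $\sim$ that links each world of $W$ to all of its copies in $W'$ is a total bisimulation between $M$ and $M_\infty$, and in particular $w\sim(w)$; by the corollary, $M_\infty$ is $\Thetatrans$-$\thetatrans$-harmonious. The point to verify is that $\sim$ respects the designated symbol $\thetatrans$: since $(w_1,\dots,w_n)\in\Delta'$ iff $w_n\in\Delta$, whenever $x\sim y'$ we have $x\in\Delta$ iff $y'\in\Delta'$, so atomic agreement extends to $\thetatrans$ and every formula of the extended language is invariant along $\sim$. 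Therefore $M_\infty,(w)$ satisfies exactly the formulas $M,w$ does, so $\Gamma$ is satisfiable in the harmonious model $M_\infty$; strong completeness follows.

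The main obstacle — and essentially the only place where care is required — is making the bisimulation-invariance step genuinely cover $\thetatrans$, which is a designated nullary modality whose extension $\Delta$ is carried separately rather than an ordinary propositional atom; this is defused by observing that the construction propagates $\Delta$ into $\Delta'$ in the obvious way, so "atomic agreement" holds for $\thetatrans$ with no extra work. A secondary point, routine but not entirely trivial, is the verification that the canonical model of $\K+\text{E}\thetatrans+\text{TI}\thetatrans$ is nice — specifically that E$\thetatrans$ forces genuine local transitivity at every $\thetatrans$-world — which is the standard $\mathbf{K4}$ completeness argument applied at a single world.
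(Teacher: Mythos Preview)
Your proof is correct and follows the same route as the paper: soundness and strong completeness for nice models via the canonical model, then the $M_\infty$ construction together with the bisimulation lemma to transfer satisfiability to a harmonious model. You simply fill in details the paper leaves implicit---the explicit soundness checks, the verification that the canonical model is nice, and (usefully) the observation that the bisimulation $\sim$ also respects $\thetatrans$ because $\Delta'$ is defined from $\Delta$ via originals.
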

\begin{proof}
The axiomatization is sound and strongly complete for $\Thetatrans$-$\thetatrans$-nice models. Furthermore, every nice model $M$ can be transformed into a bisimilar model $M_\infty$ that is in harmony. It follows that the axiomatization is sound and strongly complete for harmonious models.
\end{proof}

\section{Axioms for local properties}
\label{sec:general}

\newcommand{\Thetaref}{\Theta_\mathit{ref}}
\newcommand{\thetaref}{\theta_\mathit{ref}}
\newcommand{\Thetasym}{\Theta_\mathit{sym}}
\newcommand{\thetasym}{\theta_\mathit{sym}}
\newcommand{\Thetasup}{\Theta_\mathit{sup(a,b)}}
\newcommand{\thetasup}{\theta_\mathit{sup(a,b)}}
\newcommand{\Thetadens}{\Theta_\mathit{dense}}
\newcommand{\thetadens}{\theta_\mathit{dense}}
\newcommand{\Thetafunc}{\Theta_\mathit{func}}
\newcommand{\thetafunc}{\theta_\mathit{func}}

Many other local properties can be given a simple axiomatization in a way similar to local transitivity. Here, we consider local reflexivity, Euclidicity, symmetry, superset, density and functionality. This includes all of the examples from \cite{ditmarsch_2012}. Note that each of these properties requires the existence of one or more edges. In the case of reflexivity this requirement is unconditional; in order for $w$ to be locally reflexive there must be an edge $(w,w)\in R$. For the other properties, the requirement for the edge to exist is conditional on the existence of one or more other edges. For example, local Euclidicity requires that if $(w_1,w_2)\in R$ and $(w_1,w_3)\in R$ then $(w_2,w_3)\in R$, while local symmetry requires that if $(w_1,w_2)\in R$ then $(w_2,w_1)\in R$. It is important to note that for each of these properties it is an edge between two specific worlds that needs to exist. In the example of local Euclidicity, there must be an edge from the world $w_2$ to the world $w_3$. Any other edge, even if it is from $w_2$ to a world that is bisimilar to $w_3$, does not suffice.

An introduction axiom takes the form $\chi\rightarrow \theta$, where $\chi$ is a modal formula that does not contain $\theta$. In order for this axiom to be sound for the models where $\Theta$ and $\theta$ are in harmony, it therefore has to be the case that whenever $\chi$ is true, either (i) the condition under which the local property $\Theta$ requires an edge to exist is false or (ii) the specific edge(s) required by $\Theta$ do exist.

There is no modal formula that implies the existence of any specific edge, so $\chi$ cannot guarantee that option (ii) is the case.\footnote{There are modal formulas, such as $\lozenge \top$, that guarantee the existence of an edge, but these formulas don't guarantee the existence of a \emph{specific} edge.} There are modal formulas that imply the \emph{non}existence of a particular edge, however. For example, if $\square\bot$ is true in $w_1$ then there are no edges that start in $w_1$ and therefore, in particular, it is not the case that $(w_1,w_2)\in R$ and $(w_1,w_3)\in R$. For most local properties, it is therefore possible for $\chi$ to guarantee that condition (i) holds, and therefore for the introduction axiom to be sound, if we take $\chi = \square \bot$ or $\chi = \square\square \bot$. The unconditional nature of reflexivity makes it an exception in this regard, $\chi \rightarrow \thetaref$ is sound only if $\chi$ is unsatisfiable.

Of course the above reasoning only tells us when an introduction axiom of this form is sound, completeness remains to be shown. As in the preceding section, showing completeness is done by proving that every nice model can be transformed into a bisimilar harmonious one.

Let us now consider each of the aforementioned local properties in some more detail.
Local reflexivity, given by $\Thetaref = (w,w)\in R$, has an even simpler axiomatization than local transitivity. The elimination axiom is as one would expect, E$\thetaref$ is $\thetaref\rightarrow (\square\phi\rightarrow \phi)$. But unlike $\thetatrans$, we do not require any introduction axiom for $\thetaref$. This is because, unlike any of the other properties that we consider here, $\Thetaref$ does not contain an implication of which the antecedent can be false. As a result, there is no modal formula $\phi$ such that $M,w\models \phi$ trivially implies that $\Thetaref(w)$.

Nice models with respect to $\Thetaref$ and $\thetaref$ are therefore simply the ones where $\Thetaref(w)$ holds for all $w\in \Delta$. As in the case of transitivity, we can then unravel any nice model $M$ into a tree model $M_0$, and then modify that tree model into a bisimilar model that is in harmony. It follows that $\K + \text{E}\thetaref$ is sound and strongly complete for the class of models that are in $\Thetaref$-$\thetaref$-harmony.

Local Euclidity, symmetry and density are given by 
\[\Thetaeuc = \forall x,y (((w,x)\in R \wedge (w,y)\in R)\rightarrow (x,y)\in R),\]
\[\Thetasym = \forall x ((w,x)\in R\rightarrow (x,w)\in R)\]
and
\[\Thetadens = \forall x \exists y ((w,x)\in R\rightarrow ((w,y)\in R)\wedge (y,x)\in R),\]
respectively. The elimination axioms for these properties are as one would expect:
\begin{center}\begin{tabular}{ll}E$\thetaeuc$ & \hspace{15pt} $\thetaeuc\rightarrow (\lozenge \phi\rightarrow \square\lozenge \phi)$\\
E$\thetasym$ & \hspace{15pt} $\thetasym\rightarrow (\phi\rightarrow \square\lozenge \phi)$\\
E$\thetadens$ & \hspace{15pt} $\thetadens\rightarrow (\lozenge \phi\rightarrow\lozenge\lozenge\phi)$\end{tabular}\end{center}
For each of these properties, the antecedent of the implication is trivially false when there is no $x$ such that $(w,x)\in R$. Hence the trivial introduction axioms are simply $\square\bot\rightarrow \thetaeuc$, $\square\bot\rightarrow\thetasym$ and $\square\bot\rightarrow \thetadens$. Completeness is shown as before, by turning nice models into tree models and then those tree models into harmonious models. It follows that $\K + \text{E}\thetaeuc + \text{TI}\thetaeuc$ is sound and strongly complete for models that are in $\Thetaeuc$-$\thetaeuc$-harmony, $\K + \text{E}\thetasym + \text{TI}\thetasym$ is sound and complete for models in $\Thetasym$-$\thetasym$-harmony and $\K + \text{E}\thetadens + \text{TI}\thetadens$ is sound and complete for models in $\Thetadens$-$\thetadens$-harmony.

The local property of $R(a)$ being a superset of $R(b)$ is given by $\Thetasup = \forall x ((w,x)\in R(b)\rightarrow (w,x)\in R(a))$. Recall that this property can be read as $b$ knowing at least as much as $a$. The elimination axiom E$\thetasup$ is therefore, unsurprisingly, $\thetasup\rightarrow (\square_a\phi\rightarrow \square_b\phi)$. The antecedent of $\Thetasup$ is false when there is no $x$ such that $(w,x)\in R(b)$, so when $M,w\models \square_b\bot$. The introduction axiom TI$\thetasup$ is therefore $\square_b\bot\rightarrow \thetasup$. The presence of the different agents $a$ and $b$ does not interfere in our procedure of unraveling $M$ and turning that unraveling into a harmonious model, so $\K + \text{E}\thetasup + \text{TI}\thetasup$ is sound and complete for models in $\Thetasup$-$\thetasup$-harmony.

Finally, let us consider local functionality, $\Thetafunc = \forall x,y ((R(w,x)\wedge R(w,y))\rightarrow x=y)$. The elimination axiom E$\thetafunc$ is $\thetafunc\rightarrow ((\lozenge \phi\wedge \lozenge \psi)\rightarrow \lozenge (\phi\wedge\psi))$, and the trivial elimination axiom TI$\thetafunc$ by $\square\bot\rightarrow \thetafunc$. As with the other properties we can then unravel any nice model $M$ into a tree model $M_0$, and turn $M_0$ into a harmonious model $M_\infty$. The only difference is that in this case it does not suffice to merely add edges in order to obtain $M_{i+1}$ from $M_i$. Instead, for every world $w'\not\in \Delta'$, if there is only one $x'$ such that $(w',x')\in R_i$ then we need to create a copy of the sub-tree rooted in $x'$, the root of which we will call $x''$. Then we add this tree to $M_{i+1}$, as well as an edge $(w,x'')\in R_{i+1}$. The result of this procedure will be a harmonious model $M_\infty$ such that $M_\infty,w'$ is bisimilar to $M,w$. It follows that $\K + \text{E}\thetafunc + \text{TI}\thetafunc$ is sound and complete for models in $\Thetafunc$-$\thetafunc$-harmony.

In summary, we have the following elimination and introduction axioms.
\begin{center}\begin{tabular}{l|l|l}
Property & Elimination axiom & Introduction axiom \\\hline
$\Thetatrans$ & $\thetatrans\rightarrow (\square\phi\rightarrow\square\square\phi)$ & $\square\square\bot\rightarrow \thetatrans$\\
$\Thetaref$ & $\thetaref\rightarrow (\square\phi\rightarrow \phi)$ & -\\
$\Thetaeuc$ & $\thetaeuc \rightarrow (\lozenge \phi\rightarrow \square\lozenge \phi)$ & $\square\bot\rightarrow \thetaeuc$\\
$\Thetasym$ & $\thetasym\rightarrow (\phi\rightarrow \square\lozenge \phi)$ & $\square \bot \rightarrow \thetasym$\\
$\Thetasup$ & $\thetasup\rightarrow (\square_a\phi\rightarrow \square_b\phi)$ & $\square_b\bot\rightarrow \thetasup$\\
$\Thetadens$ & $\thetadens\rightarrow (\lozenge \phi\rightarrow \lozenge\lozenge\phi)$ & $\square\bot \rightarrow \thetadens$\\
$\Thetafunc$ & $\thetafunc\rightarrow ((\lozenge\phi\wedge\lozenge\psi)\rightarrow\lozenge(\phi\wedge\psi))$ & $\square\bot\rightarrow \thetafunc$
\end{tabular}\end{center}

Unfortunately, while this method does yield simple and elegant axiomatizations for these local properties, it is not as general as the method from \cite{ditmarsch_2012}. For one thing, we do not have a general technique that allows for the automatic generation of axiomatizations for large classes of local properties. Furthermore, unlike \cite{ditmarsch_2012} adding the axioms for multiple local properties does not necessarily yield a sound and complete axiomatization for the class of models that are in harmony for each property.

For example, $\K + \text{E}\thetaeuc + \text{TI}\thetaeuc + \text{E}\thetaref$ is not sound and complete for the models that are in both $\Thetaeuc$-$\thetaeuc$ and $\Thetaref$-$\thetaref$-harmony. This is because if $\Thetaeuc(w_1)$ and $(w_1,w_2)\in R$ then we also have $\Thetaref(w_2)$. Hence we would need the additional introduction axiom I$\thetaeuc\thetaref$, namely $\thetaeuc \rightarrow \square\thetaref$. With that additional axiom, we can once again use the same construction to turn every nice model into a harmonious model, so $\K + \text{E}\thetaeuc + \text{TI}\thetaeuc + \text{E}\thetaref + \text{I}\thetaeuc\thetaref$ is sound and complete for the models in $\Thetaeuc$-$\thetaeuc$ and $\Thetaref$-$\thetaref$-harmony.

Such additional axioms are not always needed. For example, $\K + \text{E}\thetatrans + \text{TI}\thetatrans + \text{E}\thetaref$ is sound and complete for the models in $\Thetatrans$-$\thetatrans$ and $\Thetaref$-$\thetaref$-harmony. Furthermore, where needed the extra axioms seems relatively easy to find. Yet we do not currently have a systematic way of determining whether an additional axiom is required and, if so, which axiom.

\section{Conclusion}
We have presented an alternative way to create axiomatizations for local properties. In contrast to the existing axiomatizations from \cite{ditmarsch_2012}, our approach uses introduction \emph{axioms}, as opposed to introduction \emph{rules}. Furthermore, our axioms are simpler and, in our opinion, more elegant than the rules from \cite{ditmarsch_2012}.

The price we pay for this simplicity is that we do not, as of yet, have a general way to create axiomatizations for further local properties, or for combinations of multiple local properties. Given the extremely strong similarities between the completeness proofs for the axiomatizations that we considered, we expect that some kind of generalization is possible, but we have not found it yet.

As such, the main direction for further work would be to find such generalizations.

\bibliographystyle{eptcs}
\bibliography{local_properties}

\end{document}